\newtheorem{theorem}{Theorem}
\newcommand{\ccnname}[1]{\path{#1}}
\newcommand{\Adv}{Adv}
\newcommand{\mAdv}{\mathrm{Adv}}
\newcommand{\auth}{\mathbb{U}}
\newcommand{\nauth}{\bar{\mathbb{U}}}
\begin{document}

\title{Interest-Based Access Control for Content Centric Networks (extended version)}

\numberofauthors{1}
\author{
\alignauthor
Cesar Ghali \quad Marc A. Schlosberg \quad Gene Tsudik \quad Christopher A. Wood\\
       \affaddr{University of California Irvine}\\
       \email{\{cghali, marc.schlosberg, gene.tsudik, woodc1\}@uci.edu}
}

\maketitle

\maketitle

\begin{abstract}
Content-Centric Networking (CCN) is an emerging network architecture 
designed to overcome limitations of the current IP-based Internet. 
One of the fundamental tenets of CCN is that data, or content, is 
a named and addressable entity in the network. Consumers request content
by issuing interest messages with the desired content name. These interests
are forwarded by routers to producers, and the resulting content object is returned
and optionally cached at each router along the path. In-network caching
makes it difficult to enforce access control policies on sensitive content
outside of the producer since routers only use interest information for forwarding
decisions. To that end, we propose an Interest-Based Access Control (IBAC) scheme 
that enables access control enforcement using only information contained in
interest messages, i.e., by making sensitive content names unpredictable
to unauthorized parties. Our IBAC scheme supports both hash- and encryption-based 
name obfuscation. We address the problem of interest replay
attacks by formulating a mutual trust framework between producers and consumers
that enables routers to perform authorization checks when satisfying interests
from their cache. We assess the computational, storage, and bandwidth overhead
of each IBAC variant. Our design is flexible
and allows producers to arbitrarily specify and enforce any type of access control
on content, without having to deal with the problems of content encryption and 
key distribution. This is the first comprehensive design for CCN access control
using only information contained in interest messages.
\end{abstract}

\section{Introduction} \label{sec:intro}
The purpose of the original Internet in the 1970s was to provide end-to-end communication
for a few thousand users to access scarce and expensive resources via terminals. Since then the
number of Internet users has grown exponentially, reaching more than 3 billion, with each using a 
wide variety of applications from the dynamic web to content distribution. This shift of usage
exposed certain limitations of the IP-based Internet design and motivated exploration of new architectures.

Content-Centric Networking (CCN) is an approach to inter-networking exemplified by two well-known
research efforts: CCNx~\cite{moskoccn} and Named-Data Networking (NDN)~\cite{jacobson2009networking}. 
The main goal of CCN is to develop the next-generation
Internet architecture with an emphasis on efficient content distribution, security, and privacy.
Unlike current IP-based networking where data is requested by addressing the machine where it
is hosted, each CCN {\em content} is assigned a unique name. Users (referred to as consumers) request
content objects by issuing an {\em interest} for a given name. This interest can then be satisfied or
served from any entity (i.e., producer or router) as long as the replied content's name matches that of the interest.

To facilitate efficient content distribution, a CCN router maintains a cache.
This enables routers to satisfy interests, which reduces end-to-end latency and decreases
bandwidth utilization when requesting popular content. Since interest messages
may be satisfied by any cached version of the content, interest messages may not,
and need not, reach the producer. Therefore, enforcing content access control
\emph{within the network} is a challenge. Furthermore, even if all interests are
forwarded to producers, the latter might not be able to enforce access control since interest
messages, by design, do not carry any form of consumer identification or authentication information.

In this paper, we propose an access control scheme based on interests
-- Interest-Based Access Control (IBAC). The intuition is that if consumers are not
allowed access to certain content, they should not be able to generate the corresponding
interests, i.e., they should not be able to learn the content's name. IBAC may also be used with content encryption
to conceal both the name and the payload of the content object. However, using IBAC in isolation is
advantageous in scenarios where content object payloads may need to be modified by an
intermediate service, e.g., a media encoding application or proxy. In this case, content
encryption prevents such modifications by services or applications besides the producer.
Moreover, although IBAC involves the network layer,
we believe that this is necessary to allow routers (with caches) to enforce access control.
To be more specific, we claim that {\em any entity which serves content
should also be able to authorize interests for said content}. 

The main contributions of this paper are:

\begin{itemize}
\item Architectural modifications to support IBAC without diminishing caching benefits.
\item A mutual trust scheme wherein routers can verify whether consumers are authorized to access cached content.
\item A security analysis of the proposed IBAC scheme.
\item Evaluation of router performance overhead when serving content via IBAC compared to publicly accessible content.
\end{itemize}

The rest of this paper is organized as follows. Section \ref{sec:overview} presents an overview of CCN architectures in the
context of CCNx. Section \ref{sec:ac} then provides an overview of access control techniques for CCN, including
encryption- and interest-based access control. Section \ref{sec:secmodel} presents the relevant security definitions and
adversarial models used to assess the IBAC scheme discussed at length in Section \ref{sec:ibac}. We discuss the IBAC
security consideration in Section \ref{sec:security}. We then analyze the overhead incurred by IBAC in Section
\ref{sec:analysis} and conclude in Section \ref{sec:conclusion}.

\section{CCN Overview} \label{sec:overview}
Content Centric Networking (CCN) is one of the main Information-Centric Networking (ICN) architectures. 
Related architectures, such as Named Data Networking (NDN) \cite{NDN},
are similar, albeit with some small protocol and packet format differences. 
This section overviews ICNs in the context of the CCN protocol and CCNx reference implementation. 
Given familiarity with either CCN or NDN, it can be skipped without loss of continuity.

In contrast to TCP/IP, which focuses on end-points of communication and their names and addresses,
ICN architectures such as CCN \cite{jacobson2009networking,moskoccn} focus on content by making it
named, addressable, and routable within the network. A content name is a URI-like \cite{berners2005rfc} name
composed of one or more variable-length name components, each separated by a \ccnname{/}
character. To obtain content, a user (consumer) issues a request, called an \emph{interest} message,
with the name of the desired content. This interest will be \emph{satisfied} by either
(1) a router cache or (2) the content producer. A \emph{content object} message is returned to
the consumer upon satisfaction of the interest. Moreover, name matching in CCN is
exact, e.g., an interest for \ccnname{lci:/facebook/Alice/profile.html} can only be satisfied by
a content object named \ccnname{lci:/facebook/Alice/profile.html}.\footnote{Name matching is not exact in NDN \cite{NDN}.}

Aside from the content name, CCN interest messages may include the following fields:
\begin{itemize}
\item \texttt{Payload} -- enables consumers to push data to producers along with the
request.\footnote{Currently, NDN interest messages do not provide an arbitrary-length payload and
therefore cannot support the proposed IBAC scheme. However, if in the future the NDN
interest format is modified to include a field similar to the CCNx payload, our IBAC scheme
will become applicable.}
\item \texttt{KeyID} -- an optional hash digest of the public key used to verify
the desired content's digital signature. If this field exists, the network guarantees
that only content objects which can be verified with the specified key will be returned
in response to an interest.
\item \texttt{ContentObjectHash} -- an optional hash value of the content being requested. If this
field exists, the network guarantees the delivery of the exact content that consumer requests.
\end{itemize}
CCN content objects include several fields. In this work, we are only interested in the following three:
\begin{itemize}
\item \texttt{Name} -- a URI-like name formatted as a sequence of \ccnname{/}-separated name components.
\item \texttt{Validation} -- a composite of validation algorithm information (e.g., the signature algorithm used,
its parameters, and a link to the public verification key), and validation payload (e.g., the signature). We use the
term ``signature'' to refer to this field.
\item \texttt{ExpiryTime} -- an optional, producer-recommended time for the content objects to be cached.
\end{itemize}
There are three basic types of entities in CCN that are responsible for transferring interest and 
content object messages:\footnote{A physical entity, or host, can be both a consumer and producer of content.}
\begin{itemize}
\item {\em Consumer} -- an entity that issues an interest for content.
\item {\em Producer} -- an entity that produces and publishes content.
\item {\em Router} -- an entity that routes interest packets and forwards corresponding content packets.
\end{itemize}
Each CCN entity must maintain the following two components:
\begin{itemize}
\item {\em Forwarding Interest Base} (FIB) -- a table of name prefixes and corresponding outgoing interfaces. The FIB is used to route interests based on longest-prefix-matches of their names.
\item {\em Pending Interest Table} (PIT) -- a table of outstanding (pending) interests and a set of corresponding incoming interfaces.
\end{itemize}
An entity may store an optional {\em Content Store} (CS), which is a buffer used for content caching
and retrieval. Again, the timeout of cached content is specified in the \texttt{ExpiryTime} field of the
content header. From here on, we use the terms {\em CS} and {\em cache} interchangeably.

Router entities use the FIB to forward interests from consumers to producers, and then later
use the PIT to forward content object messages along the reverse path to the consumer.
More specifically, upon receiving an interest, a router $R$ 
first checks its cache to see if it can satisfy this interest locally.
Producer-originated digital signatures allow consumers to authenticate received content, regardless of the
entity that actually served the content. Moreover, the Interest-Key Binding rule (IKB)~\cite{ghali2014elements}
enables routers to efficiently verify received content signatures before caching, in order to avoid
content poisoning attacks~\cite{ghali2014needle}. Essentially, consumers and producers provide routers with
the required trust context to enable efficient signature verification.

When a router $R$ receives an interest for name $N$ that is not cached and there are no
pending interests for the same name in its PIT, $R$ forwards the interest to the next
hop according to its FIB. For each forwarded interest, $R$ stores some amount of state information,
including the name of the interest and the interface from which it arrived, so that
content may be sent back to the consumer. If an interest for $N$ arrives while there is
already an entry for the same content name in the PIT, $R$ only needs to update the
arriving interface. When content is returned, $R$ forwards it to all of the corresponding
incoming interfaces, and the PIT entry is removed. If a router receives a content object
without a matching PIT entry, the message is deemed unsolicited and subsequently discarded.

\section{Access Control Overview} \label{sec:ac}
One key feature of CCN is that content is decoupled from its
source; there is no notion of a secure channel between a consumer and producer.
Consequently, ensuring that only authorized entities have access to content is a
fundamental problem. In this section, we explore complementary 
approaches to access control: (1) content encryption and (2) interest name
obfuscation and authorization. 

\subsection{Encryption-Based Access Control}
The most intuitive solution to the access control problem is via encrypted content
which can only be decrypted by authorized consumers possessing the appropriate
decryption key(s). This enables content objects to be disseminated
throughout the network since they cannot be decrypted by adversaries 
without the appropriate decryption key(s).

Many variations of this approach have been proposed \cite{Smetters2010,Misra2013,Ion2013,Wood2014}.
Kurihara et al. \cite{ifip15} generalized these specialized approaches in
a framework called CCN-AC, an encryption-based access control framework
to implement, specify, and enforce access policies. It uses CCN manifests\footnote{Manifests are special 
types of content that are used to provide structure and additional information to otherwise flat 
and simple content objects \cite{moskoccn}.}
to encode access control specification information for a particular set of
content objects. Consumers use information in the manifest to (1) request appropriate
decryption keys and (2) use them to decrypt the content object(s) in question.

Outside of ICN, there have been many proposed access control frameworks based on encryption.
Recently, access control in shared cloud storage or social network
services, e.g., Google Drive, Dropbox, and Facebook, generated a great 
deal of attention from the research community~\cite{Zhou2013,Wang2010a,Yu2010,Jahid2011}.
For instance, Kamara et al. \cite{Kamara2010} modeled encryption-based access control framework
for cloud storage. Microsoft PlayReady \cite{MSPlayReady} is another popular access control
framework for encrypted content dissemination over the Internet.

Despite its widespread use, encryption-based access control causes potentially prohibitive overhead 
for both producers and consumers. It most cases where hybrid encryption is used, it also requires 
keys to be distributed alongside each content object, which introduces another consumer-to-producer
message exchange. Also, encryption-based access control does not provide flexibility if
content objects need to be modified by an intermediate service, e.g., a media
encoding or enhancement application. Content encryption prevents such post-publication
modifications without revealing the secret decryption key(s) to such services.

\subsection{Interest-Based Access Control}
Interest-based access control (IBAC) is an alternative technique, though not mutually exclusive
with content encryption, for implementing access control in CCN. It is based on interest {\em name obfuscation}
and {\em authorized disclosure}. Name obfuscation hides the \emph{target} of an
interest from eavesdroppers. As mentioned in \cite{jacobson2009networking},
name obfuscation has no impact on the network since routers use only the binary
representation of a name when indexing into PIT, CS, and FIB. As long as producers generate
content objects with matching names, the network can seamlessly route interests and
content objects with obfuscated names. However, interests with
obfuscated names must contain \emph{routable prefixes} so that they can be
forwarded from consumers to the producers. In other words, only a 
subset of name components (e.g., suffix of the name) is obfuscated.

Another goal of name obfuscation is to \emph{prevent unauthorized users from creating 
interests for protected content}. In other words, if a particular consumer $Cr$ is not permitted
to access content with name $N$, $Cr$ should not be able to generate $N' = f(N)$, where
$f(\cdot)$ is some obfuscation function that maps $N$ to an obfuscated name $N'$. 
For routing purposes, only the \emph{suffix} of the name is obfuscated; there must exist
a cleartext prefix that is used to route the interest with a partially obfuscated name
to the intended producer. Possible obfuscation functions include keyed cryptographic 
hash functions and encryption algorithms. We explore both possibilities in this paper.

Authorized disclosure is the second element of IBAC. This property implies that
any entity serving content must authorize any interest for said content before it is served. 
In this context, authorization is necessarily coupled with authentication 
so that the entity serving the content can determine the identify of the requesting consumer. 
Therefore, consumers must provide sufficient authentication information, e.g., via an interest signature.
Thus, to implement authorized disclosure (in the presence of router caches), any entity serving content
must (a) possess the information necessary to perform authentication and
authorization checks and (b) actually verify the provided authentication
information. This issue is discussed at length in Section \ref{sec:acky}.
It is worth mentioning that disabling content caching defers authorized disclosure checks to producers. 
In this case, all interests will be forwarded to producers that posses the 
information needed to perform these checks. However, by itself, prohibiting content from being cached 
is \emph{not} a form of access control and reduces the effectiveness of content retrieval.

Fotiou et. al. \cite{fotiou2012access} proposed an access control mechanism similar to IBAC
for non-ICN architectures, and conjectured that it should be applicable to ICNs.
In \cite{fotiou2012access}, access control computation and overhead are delegated to
a separate, non-cache entity. This entity, known as the access control provider, maintains
access control policies given by a specific producer. Each content object has a pointer to
a function that determines whether or not to serve the content to the requesting consumer,
and the access control provider is responsible for evaluating this function. Content
objects are stored at relaying parties, which are oblivious to the specific access control
policy protecting the content objects. Similarly, the access control provider has no
knowledge of the consumer requesting the content (for user privacy purposes), and just
evaluates whether the relaying party should forward the content object. The cache, in
this scenario, is not responsible for the extra computational overhead. This approach
is different from our work in that we (1) maintain the association between content and
authorization, and (2) provide routers with an efficient authorization verification method,
thus eliminating the need for an external access control provider.

\section{Security Model} \label{sec:secmodel}
Let $\auth(N)$ denote the set of authorized consumers for a content
object with name $N$ generated and controlled by a producer $P$, and let
$\nauth(N)$ be its complement, i.e., the set of all unauthorized consumers.
Let $\mathsf{Path}(Cr, P)$ be the set of all routers on the path between the
consumer $Cr \in \auth(N)$ and $P$. We assume the existence of
an adversary \Adv\ who can deploy and compromise
unauthorized consumer any any router $R \notin \mathcal{R}$.\footnote{Any one of these actions can be performed
adaptively, i.e., in response to status updates or based on observations.} To keep
this model realistic, we assume that the time to mount such an attack is
non-negligible, i.e., longer than the average RTT for a single interest-content
exchange. Table \ref{tab:acronym} summarizes the notation used in the rest of this paper.

Formally, we define \Adv\ as a 3-tuple:
$
(\mathcal{P}_{\mAdv} \setminus \{P\}, \mathcal{C}_{\mAdv} \setminus \auth(N), \mathcal{R}_{\mAdv} \setminus \mathsf{Path}(Cr, P))
$
where the components denote the set of compromised producers, consumers, and routers,
respectively. If \Adv\ controls a producer or a consumer then it is assumed to have
complete and adaptive control over how they behave in an application session.
Moreover, \Adv\ can control all of the timing, format, and actual information of
each content through compromised nodes and links.

Let $\mathsf{Guess}$ denote the event where \Adv\ correctly recovers the
obfuscated form of a content name. Let $\mathsf{Bypass}$ denote the event
where \Adv\ successfully bypasses the authorization check for a protected
content object. We define the security of an IBAC scheme with respect to
these two events as follows.

\begin{definition}
\label{def:secure_nbac_1}
An IBAC scheme is {\em secure, but subject to replay
attacks,} if $\Pr[\mathsf{Guess}] \leq \epsilon(\kappa)$ for any negligible
function $\epsilon$ and a security parameter $\kappa$.
\end{definition}

\begin{definition}
\label{def:secure_nbac_2}
An IBAC scheme is {\em secure in the presence of replay
attacks,} if $\Pr[\mathsf{Guess} + \mathsf{Bypass}] \leq \epsilon(\kappa)$ for
any negligible function $\epsilon$ and a security parameter $\kappa$.
\end{definition}

Replay attacks are artifacts of the environment where CCN 
access control scheme is deployed. In other words, in
networks where links are insecure, passive eavesdroppers can observe previously
issued interests and replay them for protected content. Consequently, these attacks
are considered orthogonal to the security of the underlying obfuscation scheme used for
access control enforcement. The authorized disclosure element of IBAC is intended
to prevent such replay attacks.

To justify our adversarial limitation to off-path routers, consider the following scenario.
If \Adv\ can compromise a router $R \in \mathsf{Path}(Cr, P)$,
then \Adv\ is able to observe \emph{all} content that flows along this path.
Therefore, we claim that on-path adversaries motivate access control schemes 
based on content encryption; IBAC will not suffice. Moreover, we exclude adversaries 
capable of capturing interests and replaying them in other parts of the network -- see 
Section \ref{sec:replay} for details.

\begin{table}[t]
\begin{center}
\caption{Relevant notation.}
\label{tab:acronym}
\small
\begin{tabular}{|c|l|} \hline
{\bf Notation} & {\bf Description} \\ \hline
\Adv\ & Adversary \\ \hline
$Cr$ & Consumer \\ \hline
$P$ & Producer \\ \hline
$\mathsf{prefix}$ & Producer prefix \\ \hline
$N$ & Content name in cleartext \\ \hline
$N'$ & Obfuscated content name \\ \hline
$I[N]$ & Interest with name $N$ \\ \hline
$CO$ & Content object \\ \hline
$CO[N]$ & Content object with name $N$ \\ \hline
$\mathsf{ID}(\cdot, \cdot)$ & Key identifier function \\ \hline
$f(\cdot)$ & Obfuscation function \\ \hline
$\mathsf{enc}(\cdot, \cdot), \mathsf{dec}(\cdot, \cdot)$ & Symmetric-key encryption and \\
& decryption function \\ \hline
$\mathsf{Enc}(\cdot, \cdot), \mathsf{Dec}(\cdot, \cdot)$ & Public-key encryption and \\
& decryption function \\ \hline
$\mathsf{H}(\cdot)$ & Cryptographic hash function \\ \hline
$\auth(N)$ & Set of authorized consumers \\ \hline
$\mathbb{G}_i$ & Access control group $i$ \\ \hline
$k_{\mathbb{G}_i}$ & Obfuscation key of group ${\mathbb{G}_i}$ \\ \hline
$pk^s_{\mathbb{G}_i}, sk^s_{\mathbb{G}_i}$ & Public and private signing key pair \\
& associated with group $\mathbb{G}_i$ \\ \hline
$\kappa$ & Global security parameter \\ \hline
$\mathbb{C}$ & Set of all content objects \\ \hline
$r, t$ & nonce and timestamp \\ \hline
$\mathbb{B}$ & Nonce hash table \\ \hline
\end{tabular}
\end{center}
\end{table} 
\section{IBAC by Name Obfuscation} \label{sec:ibac}
Recall that the intuition behind IBAC is that if consumers are not allowed to access
certain content, they should not be able to issue a ``correct'' interest for it.
Specifically, only a consumer $Cr \in \auth(N)$ should be able to derive the
obfuscated name $N'$ of an interest requesting content with name $N$ provided by producer $P$. In this
section, we discuss two types of name obfuscation functions:
(1) encryption functions and (2) hash functions. 


\subsection{Encryption-Based Name Obfuscation}
Let $\mathsf{Enc}(k, N)$ be a \emph{deterministic} encryption function which
takes as input a key $k \in \{0,1\}^{\kappa}$ and an arbitrary
long non-empty binary name string $N$, and generates an encrypted
name $N'$. Let $\mathsf{Dec}(k, N')$ be the respective 
decryption function. With encryption, the goal is for authorized clients to encrypt components of 
a name so that the producer can perform decryption to identify
and return the appropriate content object.\footnote{Recall that a cleartext name prefix is needed
to route the interest to the intended producer.} Obfuscation is based
on knowledge of the encryption key and the content name under IBAC protection.
In other words, even if an adversary knows the name $N$, it cannot generate $N'$
since it does not possess the appropriate key. 

To illustrate how encryption-based obfuscation would work, assume first 
that $Cr$ uses $k$ to generate $N'$ as $N' = \mathsf{Enc}(k, N)$. $P$ then 
recovers $N$ as $N = \mathsf{Dec}(k, N')$ to identify the content object in 
question and returns it with the \emph{matching name} $N'$ (not $N$). 
We prove the security of this obfuscation variant of IBAC (i.e., without
authorized disclosure) in Appendix \ref{app:proofs}. \\

\noindent
\textbf{Supporting Multiple Access Groups}: Thus far, we assumed that name encryption
(obfuscation) keys are known to all authorized consumers in $\auth(N)$. However, this
might not be the case in practice. $P$ might provide content under IBAC to several
access groups each with different privileges.\footnote{We assume that each content object
is only accessible by a single access group. However, this assumption will be relaxed
later in the paper.} Specifically, consumers in groups $\mathbb{G}_i(N) \subset \auth(N)$,
for $i = 1, 2, \dots$, might be allowed access to different resources. Therefore,
several obfuscation keys, one for each group, should be utilized. For notation simplicity, 
we refer to $\mathbb{G}_i(N)$ as $\mathbb{G}_i$. Note that in an extreme scenario,
each group would only contain a single consumer, i.e., each individual consumer has
a unique key used to access the content in question. 

To decrypt the obfuscated name $N'$, $P$ must identify the obfuscation key used to generate $N'$. 
This can be achieved if such consumers specify an identifier for the
key used in the interest. Such an identifier could
simply be the digest of the obfuscation key $\mathsf{ID}_{\mathbb{G}_i} = \mathsf{H}(k_{\mathbb{G}_i})$,
where $k_{\mathbb{G}_i}$ is $\mathbb{G}_i$'s encryption key. $\mathsf{ID}_{\mathbb{G}_i}$ can be included in
the interest {\tt Payload} field. Since matching in CCN is exact, $\mathsf{ID}_{\mathbb{G}_i}$
cannot be included in interests name.

Recall that CCN interest messages, by design, do not carry any source information,
which provides some degree of anonymity. However, including $\mathsf{ID}_{\mathbb{G}_i}$
enables interest linkability by eavesdroppers (malicious or not). In other words,
$\mathsf{ID}_{\mathbb{G}_i}$ can reveal the access group identities to which
consumers belong, but not the identities of the consumers 
themselves. If this linkability is an issue for applications, $\mathsf{H}(k_{\mathbb{G}_i})$ can be encrypted using $P$'s
public key $pk^P$ in the form $\mathsf{ID}_{\mathbb{G}_i} = \mathsf{Enc}(pk^P, \mathsf{H}(k_{\mathbb{G}_i}))$.\footnote{Since a consumer cannot be expected to know the router from which content will be served, it is not plausible for them to encrypt these IDs with the public key of a (set of) router(s).}
Note that for two identifier values of the same group, i.e., with the same $k$,
to be indistinguishable, $\mathsf{Enc}(\cdot, \cdot)$ must be secure
against chosen plaintext attacks~\cite{katz2014introduction}.


\subsection{Hash-Based Name Obfuscation} \label{sec:hash-obfuscation}
Let $\mathsf{H}(k, N)$ be a keyed cryptographic hash function. The 
obfuscated name $N'$ can be generated as $N' = \mathsf{H}(k, N)$ for some key $k \in \{0,1\}^{\kappa}$.
Since hash functions are one-way, producers must maintain a hash table that maps obfuscated
names to the original content name, i.e., $\mathsf{M} : N' = H(k, N) \to N$ for all deployed 
keys.\footnote{Producers do not have to keep hash tables for all {\em possible} keys of size $\kappa$, 
only tables of keys used by producers and issued to access groups.} The size of this
hash table is $\mathcal{O}(|\mathbb{K}| \times |\mathbb{C}|)$, where 
$\mathbb{K}$ is the set of all keys and $\mathbb{C}$ is set of all content objects 
generated or published by $P$ under IBAC protection. This approach provides the same benefits of encryption-based
name obfuscation, however, it incurs additional computation and storage overhead at the producer.
Thus, while keyed hash functions are viable for name obfuscation, deterministic encryption 
is a much better approach.


\section{Security Considerations} \label{sec:security}
In this section we discuss the security of IBAC with respect to the 
adversary model described in Section~\ref{sec:secmodel}.

\subsection{Replay Attacks} \label{sec:replay}
Regardless of the obfuscation function used, both previously described IBAC schemes
are susceptible to replay attacks. This is because both obfuscation functions are deterministic. 
Therefore, an eavesdropper $\mAdv \in \nauth(N)$ could issue an
interest with a captured $N'$ and receive the corresponding content under IBAC protection
from either the producer or a router cache. In other words, the same ``feature'' that 
makes it possible for authorized consumers to fetch IBAC-protected content from router caches
also makes it susceptible to replay attacks.

Such replay attacks are problematic in many access control systems. Standard countermeasures
include the use of random, per-message nonces or timestamps. Nonces help ensure
that each message is unique, whereas timestamps protect against interests being
replayed at later points in time. Thus, to mitigate replay attacks, we use both nonces and timestamps.
In particular, each consumer $Cr \in \auth(N)$ must issue an interest with (1) name $N'$,
(2) a randomly generated nonce $r$, and (3) a fresh timestamp $t$. The reason why we use 
both nonces and timestamps is to allow for loosely synchronized clocks and unpredicted network 
latencies. Note that if (1) clocks of consumers, producers, and involved routers in IBAC can be 
perfectly synchronized, and (2) network latencies can be accurately predicted, only timestamps 
are sufficient for replay detection. Moreover, since nonces and
timestamps serve a purpose which is orthogonal to content identification and message
routing, they are included in the interest payload.

Consumer nonces are random $\kappa$-bit values. If a router
receives a duplicate nonce, it can safely assume that the corresponding interest is
replayed and drop it. Let $w$ be a time window associated with authorized content.\footnote{
Determining the proper value of $w$ is outside the scope of 
this paper. However, a logical approach is for routers to use
the lifetime of authorized content as $w$.}
To determine if a duplicate nonce was received, producers (or caches) must maintain
a collection of nonces for each such content. In other words, this 
historical information is necessary to prevent replay attacks. Timestamps themselves 
are not stored, they are only used to determine if the received interest is issued within 
the acceptable time window $w$. Once this time window elapses, all of the stored 
nonces are erased and the content is subsequently flushed from the cache.

Although using nonces and timestamps allows detection of replayed
interests, \Adv\ capturing interests can still use their obfuscated names $N'$ to
fabricate another interest with legitimate $r$ and $t$ values. Therefore, we also
stipulate that $r$ and $t$ should be authenticated via a digital signature; their
signature $\sigma$ is also included in the interest {\tt Payload} field. In order to
bind $r$ and $t$ to their corresponding interest, $N'$ is also included in
the signature computation. $\sigma$ generation and verification 
should be performed using the public and private key pairs associated with 
each access group $\mathbb{G}_i$.

After adding nonces, timestamps, and a signature, interest {\tt Payload} fields take the following form:
\begin{align*}
\mathrm{\tt Payload} = \left( \mathsf{ID}_{\mathbb{G}_i}, r, t, \sigma = \textsf{Sign}_{sk^s_{\mathbb{G}_i}} \left( N' || \mathsf{ID}_{\mathbb{G}_i} || r || t \right) \right)
\end{align*}
where $\mathsf{ID}_{\mathbb{G}_i}$ is the identify of group $\mathbb{G}_i$, and
$sk^s_{\mathbb{G}_i}$ is a signing key distributed to all consumers in $\mathbb{G}_i$.
To verify $\sigma$, the matching public key $pk^s_{\mathbb{G}_i}$ must
be obtained. For the remainder of this paper, we use the term {\em authorization information}
to refer to all information included in interest {\tt Payload} fields for the purpose of supporting IBAC.

One alternative to digital signatures would be to use a keyed hash or a Message Authentication
Code function such as (HMAC) \cite{krawczyk1997hmac}. In this case, consumers and routers would need to
share the key used in the HMAC computation. This means that either consumers or producers 
need to distribute HMAC keys to all involved routers. This, however, is problematic for 
two main reasons: (1) compromising routers leads to HMAC keys leakage, and, more importantly, (2) if consumers 
provide routers with these keys, the former need to know the set of routers that their 
interests traverse before issuing them. Furthermore, since HMAC keys should only be shared among 
involved all entities, i.e., $Cr$ and all routers on $\mathsf{Path}(Cr, P)$, they 
must be distributed securely. Regardless of the 
distribution method used, this incurs extra overhead and complexity compared to simply including, 
in cleartext, signature verification (public) keys in content objects.

Finally, consider the following scenario where two routers $R_1$ and $R_2$ cache content object $CO[N']$ 
which is under IBAC 
protection. Assume that consumer $Cr$ requests $CO[N']$ by sending an interest $I[N']$ with valid 
authorization information that includes $r$ and $t$. Assume that $I[N']$ is satisfied from $R_1$'s cache. 
At the same time, \Adv, 
an eavesdropper between $Cr$ and $R_1$, records $I[N']$. In this case, \Adv\ can replay $I[N']$ 
to $R_2$ and receive $CO[N']$ from the cache since routers do not synchronize stored nonces. 
Therefore, there is no way for $R_2$ to know that $r$ and $t$ were already used at $R_1$. One way of solving 
this problem is to have routers share used nonces lists for each content under IBAC they serve from cache. 
For this method to be effective, such nonces lists need to be securely shared with every single router in the 
network. This might not be feasible in large networks such as the Internet. Another approach 
is to have more accurate synchronized clocks allowing a smaller time window for the aforementioned 
attack to be carried.

\subsection{Authorized Content-Key Binding Rule} \label{sec:acky}
Although the aforementioned method for generating authorization information mitigates replay attacks, 
it also raises several questions. Firstly, how does a router efficiently verify the
signature in interest {\tt Payload} fields? Secondly, and perhaps more importantly, if
a router is able to {\em obtain} the key(s) necessary to verify this signature,
how can the router be sure that such key(s) can be trusted?

To address these problems we propose a mutual trust framework for authorized disclosure. Ghali et al. \cite{ghali2014elements}
first studied the problem of trust in NDN, and ICNs in general, as a means of preventing
content poisoning attacks \cite{ghali2014needle, gasti2012ddos}. Even if routers
can verify content signatures before replying from their cache, it does
not mean that said content is actually \emph{authentic}. Ghali et al. observed
that this verification process requires insight about trust in public keys
(used in verification) that is only known to applications. Consequently,
this requires that all interests must either supply (1) the hash of the public
key used to verify the signature, or (2) the hash of the requested content. In
effect, the interest reflects the trust context of the issuing consumer in a form
enforceable at the network layer. This framework can be viewed as one-way trust
of content by routers. We extend this framework to allow producers to distribute
information about authorized consumers, which can also be enforceable at the
network layer. This allows routers to make trust decisions about individual interests.

Recall that in order for routers to verify which interests are authorized to access
cached content protected under IBAC, the signature in {\tt Payload} must be verified.
To achieve this, producers should include the appropriate verification key
with each IBAC-protected content object. To better understand this, assume the
following scenario. Consumer $Cr \in \mathbb{G}_i$, for $\mathbb{G}_i \subset \auth(N)$,
requests content with name $N$ by issuing an interest with obfuscated name $N'$,
and $\mathsf{ID}_{\mathbb{G}_i}$, $r$, $t$ and $\sigma$ in {\tt Payload} as
described in Section \ref{sec:replay}. Assume that the matching content is not cached
anywhere in the network. Once this interest reaches the producer $P$, the latter verifies 
$\sigma$ and replies with the content that also includes the verifying
key $pk^s_{\mathbb{G}_i}$.\footnote{The content object signature must also be computed over $pk^s_{\mathbb{G}_i}$ to bind it to the message.} Router $R$ will then cache $pk^s_{\mathbb{G}_i}$ along with the content itself.
Once another interest for $N'$ is received, $R$ uses the cached
$pk^s_{\mathbb{G}_i}$ to verify $\sigma$ and returns the corresponding cached content object. 

We formalize this in the following policy, denoted as the Authorized Content-Key Binding (ACKB) rule:

\vspace{0.2cm}
\centerline{\fbox{\parbox{0.968\columnwidth}{\textcolor{blue}{\sf {\bf ACKB:} Cached content protected under IBAC must reflect the verification key associated with the authorization policy.}}}}
\vspace{0.2cm}

\begin{algorithm}[t]
\caption{{\sf InterestGeneration}}\label{alg:interest-generation}
\begin{algorithmic}[1]
\scriptsize
\STATE \textbf{INPUT:} $\mathsf{routable\_prefix}$, $N$, $k_{\mathbb{G}_i}$, $pk^s_{\mathbb{G}_i}$, $sk^s_{\mathbb{G}_i}$, $\kappa$

\STATE $\mathsf{ID}_{\mathbb{G}_i} \gets \mathsf{H}(k_{\mathbb{G}_i})$
\STATE $N' \gets /\mathsf{routable\_prefix}/f(k_{\mathbb{G}_i}, \textsf{Suffix}(N, \mathsf{routable\_prefix}))$
\STATE $r \xleftarrow{\$} \{0,1\}^{\kappa}$
\STATE $t \gets \mathsf{CurrentTime}()$
\STATE $\sigma \gets \mathsf{Sign}_{sk^s_{\mathbb{G}_i}} \left( N' || \mathsf{ID}_{\mathbb{G}_i} || r || t \right)$
\STATE $\mathrm{\tt Payload} := \left( \mathsf{ID}_{\mathbb{G}_i}, r, t, \sigma \right)$
\STATE \textbf{return} $I[N'] := \left( N', \mathrm{\tt Payload} \right)$
\end{algorithmic}
\end{algorithm}

\begin{algorithm}[t]
\caption{{\sf ContentObjectGeneration}}\label{alg:content-generation}
\begin{algorithmic}[1]
\scriptsize
\STATE \textbf{INPUT:} $I[N'] := \left( \mathsf{routable\_prefix}, N', \mathrm{\tt Payload} \right)$

\STATE $\left( \mathsf{ID}_{\mathbb{G}_i}, r, t, \sigma \right) := \mathrm{\tt Payload}$
\STATE $pk^s_{\mathbb{G}_i} \gets \mathsf{LoopupVerificationKeyForID}(\mathsf{ID}_{\mathbb{G}_i})$
\IF{$\mathsf{Verify}_{pk^s_{\mathbb{G}_i}}(\sigma)$}
	\STATE $k^e_{\mathbb{G}_i} \gets \mathsf{LookupDecryptionKeyForID}(\mathsf{ID}_{\mathbb{G}_i})$
	\STATE $N \gets \mathsf{Dec}(k^e_{\mathbb{G}_i}, \mathsf{Suffix}(N', \mathsf{routable\_prefix}))$
	\STATE $\mathsf{data} \gets \mathsf{RetrieveContent}(N)$
	\STATE \textbf{return} $CO[N'] := (N', \mathsf{data}, pk^s_{\mathbb{G}_i})$
\ELSE
	\STATE Drop $I[N']$
\ENDIF
\end{algorithmic}
\end{algorithm}

\begin{algorithm}[t]
\caption{{\sf RouterAuthorizationCheck}}\label{alg:router-check}
\begin{algorithmic}[1]
\scriptsize
\STATE {\textbf{INPUT:} $I[N']$, cached $CO[N']$, $\mathbb{B}$}

\STATE $\left( \mathsf{ID}_{\mathbb{G}_i}, r, t, \sigma \right) := \mathrm{\tt Payload}$
\STATE $(N', \cdot, pk^s_{\mathbb{G}_i}) := CO[N']$
\IF{$\mathbb{B}[N']$ contains $r$}
	\STATE Drop $I[N']$; \textbf{return} {\sf Fail}
\ELSE
	\IF{Timestamp $t$ is invalid}
		\STATE Drop $I[N']$; \textbf{return} {\sf Fail}
	\ELSE
		\IF{$\mathsf{Verify}_{pk^s_{\mathbb{G}_i}}(\sigma)$}
			\STATE $\mathbb{B}[N'] := \mathbb{B}[N'] \cup r$
			\STATE \textbf{return} {\sf Pass}
		\ELSE
			\STATE Drop $I[N']$; \textbf{return} {\sf Fail}
		\ENDIF
	\ENDIF
\ENDIF
\end{algorithmic}
\end{algorithm}

\begin{figure}[t]
  \centering
  \includegraphics[width=\columnwidth]{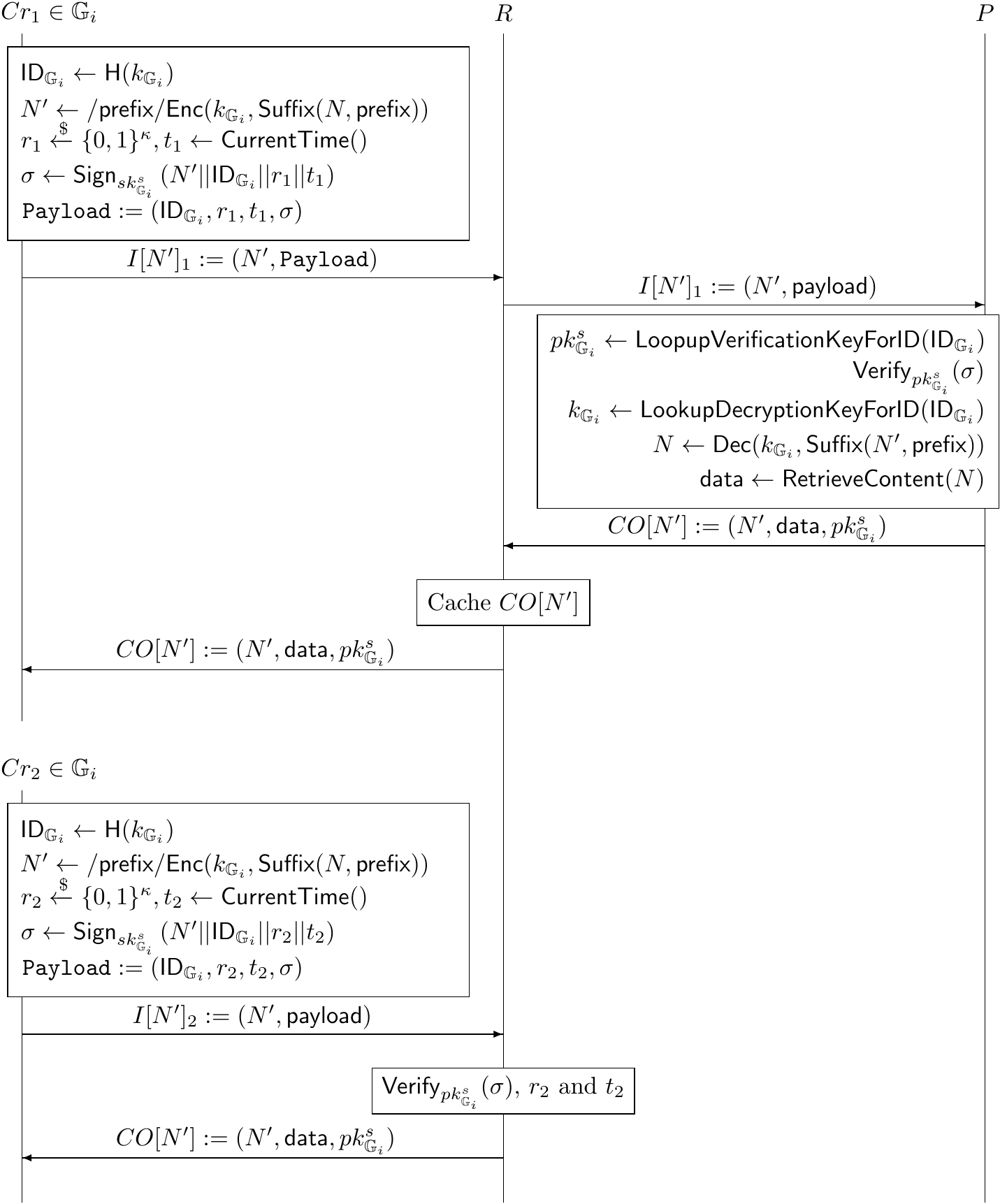}
  \caption{Consumer and producer exchanges for IBAC-protected content.}
  \label{fig:sequence}
\end{figure}

The protocol for IBAC-protected content retrieval relies on this rule. Algorithms
\ref{alg:interest-generation} and \ref{alg:content-generation} outline the interest and
content object generation procedures. Note that the function $\mathsf{Suffix}(N, \mathsf{routable\_prefix})$
returns all name components of $N$ except the ones included in $\mathsf{routable\_prefix}$.\footnote{For instance, $\mathsf{Suffix}($\ccnname{/edu/uci/ics/home.html}, \ccnname{/edu/uci/}$)$ would return \ccnname{ics/home.html}.}
Also, the router verification procedure is outlined
in Algorithm \ref{alg:router-check}. If this procedure returns {\tt Pass}, then the content object
found in the cache is forwarded downstream to the associated interface. Note
that Algorithms \ref{alg:interest-generation}, \ref{alg:content-generation}, and \ref{alg:router-check}
use obfuscation key $k_{\mathbb{G}_i}$ and signing key pairs $(pk^s_{\mathbb{G}_i}, sk^s_{\mathbb{G}_i})$. For 
completeness, a complete sequence diagram showing multiple
interest-content exchanges is shown in Figure \ref{fig:sequence}. Both consumers belong to
the same access group, i.e., $Cr_1, Cr_2 \in \mathbb{G}_i$.

In Appendix \ref{app:proofs}, we show that this mutual trust framework for authorized
disclosure enables IBAC with stronger security guarantees in the presence of replay attacks.

\subsection{Serving Content to Multiple Access Groups} \label{sec:multigroup}
One problem with encryption-based name obfuscation occurs when a content object with
name $N$ is accessible by different groups. According to Algorithms \ref{alg:interest-generation}
and \ref{alg:content-generation}, the obfuscated name $N'$ contains a suffix encrypted with keys
associated with each access group. Therefore, a single content object might have several names
depending on the number of groups authorized to access it. Since routers employs exact
matching for cache lookup\footnote{In CCN, not in NDN.}, several copies of the same content could
possibly be cached.

To solve this problem, content objects should have the exact same name regardless of 
access control groups permitted access. This can be achieved using the hash-based name obfuscation function
described in Section~\ref{sec:hash-obfuscation}. However, cached content needs to 
contain \emph{every} authorization signature verification key that could be used to access said content.
In other words, producers need to provide all possible public keys that can be used to
access the content under IBAC protection. Consider the following scenario: a content
object $CO[N]$ is accessible by two access groups $\mathbb{G}_i$ and $\mathbb{G}_j$.
In this case, the producer needs to provide both $pk^s_{\mathbb{G}_i}$ and $pk^s_{\mathbb{G}_j}$ with $CO[N']$, i.e.,
\begin{align*}
CO[N'] := (N', \mathsf{data}, pk^s_{\mathbb{G}_i}, pk^s_{\mathbb{G}_j})
\end{align*}

Whenever a router $R$ caching $CO[N']$ receives an interest issued by a consumer in
any of the authorized access groups, $R$ uses the group identity included in the
{\tt Payload} field to determine $\sigma$'s verification key.

Note that content object sizes might increase significantly depending on how many 
groups are allowed access. We do not discuss this issue further, since the trade-off 
between having multiple cached versions of the same content and having longer content 
objects carrying all verification keys is ultimately the application's decision.

\subsection{IBAC Variations} \label{sec:variations}
We do not claim that any of the IBAC variations discussed above is superior to another. Each
has its own strengths and weaknesses. However, to help guide the decision about which variation
to use, we make the following claims based on the application needs and assumptions. Note that 
some claims provide privacy as well as access control.
\begin{enumerate}
	\item If {\em replay attacks} are not a concern, then consumers only need to use a name obfuscation function and include their group identity in the {\tt Payload}.
	\item If {\em replay attacks} are plausible and {\em name privacy} is a concern, then name obfuscation must be used and authorization information, as described in Section \ref{sec:replay}, must be included in interest {\tt Payload} fields.
	\item\label{clm:no-privacy} If {\em replay attacks} are plausible but {\em name privacy} is not a concern, then only authorization information is sufficient.
\end{enumerate}
Claim \ref{clm:no-privacy} might seem counterintuitive with the idea of IBAC. Recall, however, that router
authorization checks prevent unauthorized consumers from retrieving cached content under
IBAC protection. Even if content name is not obfuscated, \Adv\ cannot forge {\tt Payload} authorization
information, and therefore cannot violate IBAC protection guarantees.

\subsection{Revocation}
Generally speaking, revocation is a challenge in all access control schemes involving secrets shared
among group members. Recall that all consumers belonging to the same access control group in IBAC
share the same obfuscation keys. If one of them leaves the group\footnote{For instance, consumers
not renewing their subscription for a certain service.}, the producer will have to create a new
key and distribute it to all remaining authorized consumers. We will not discuss this
issue further since we believe it is not part of the core access control \emph{protocol}.

Moreover, in-network caching can cause IBAC content to be accessed by revoked consumers.
Assume content $CO[N]$ is under access control and has a cached version in router $R$. Assume
consumer $Cr$, connected (directly or indirectly) to $R$, is authorized to access $CO[N]$.
However, while $CO[N]$ is cached, $Cr$'s access is revoked. At the same time, the latter sends an
interest requesting $CO[N]$. In this case, $R$ will grant access and reply with $CO[N]$ from its
cache. This is due to the fact that the cached version of $CO[N]$ is not updated with the
correct authorization information (i.e., verification key(s)). However, this can be solved by
setting the \texttt{ExpiryTime} field of $CO[N]$ to a value that reflects consumer revocation frequency.

Online revocation protocols, such as OCSP \cite{myers1999online}, would induce extra
communication between $R$ and $P$, which nearly defeats the purpose of the cache entirely. In this
case, $R$ would be better suited forwarding the interest upstream to $P$. Another option for the
producer would be to distribute certificate revocation lists (CRLs) \cite{cooper2008internet}
with every fresh content. This, however, introduces further issues for routers and consumers.
Firstly, routers would need to store CRLs and keep them updated frequently. Secondly, authorized consumers would need their own public and private key pair to compute $\sigma$. Finally, routers would need to perform
additional verifications against the CLR. Overall, this approach suffers from increased storage,
consumer management, computation, and bandwidth complexity.

\begin{table*}[t]
\small
\begin{center}
\caption{Overview of per-interest IBAC-induced computational overhead for routers and producers.}
\label{tab:overhead}
\begin{tabular}{|c|c|p{5cm}|p{5cm}|} \hline
\multicolumn{2}{|c|}{\multirow{2}{*}{\textbf{IBAC Variation}}} & \multicolumn{2}{c|}{\textbf{IBAC-induced Computation Overhead}} \\ \cline{3-4}
\multicolumn{2}{|c|}{} & \multicolumn{1}{c|}{\textbf{Routers}} & \multicolumn{1}{c|}{\textbf{Producers}} \\ \hline
\multirow{2}{*}{\textbf{Name Obfuscation}} & \textbf{Encryption} & None & One decryption \\ \cline{2-4}
                                  & \textbf{Hash}       & None & One hash table lookup \\ \hline
\multirow{2}{*}{\textbf{Interest Signatures}} & \textbf{Encryption} & One signature verification, one nonce and timestamp verification & One decryption, one signature verification, Two hash table lookups (decryption key and signing key resolution) \\ \cline{2-4}
                                     & \textbf{Hash}       & One signature verification, one nonce and timestamp verification, one hash table lookup (signing key resolution) & One signature verification, three hash table lookups (decryption key, signing key and name resolution) \\ \hline
\end{tabular}
\end{center}
\end{table*}

\section{Analysis and Evaluation} \label{sec:analysis}


In this section, we analyze the overhead induced by each variation of the proposed IBAC scheme.

\subsection{Computational Overhead}
We first focus on the computational overhead for routers and producers. This overhead is captured in
terms of cryptographic and data structure operations, e.g., signature verification and hash table
lookup costs. Table \ref{tab:overhead} summarizes these results. To further understand the computational
overhead, we compare two cases: (1) when routers perform authorization checks, and (2) when they do not.
Let $\tau_{overhead} = \tau_{check} + \tau_{verify} + \tau_{update}$ be the overhead induced
by the authorization check when routers receive interests, where $\tau_{check}$ is the time required to
check for nonce duplication and timestamp staleness, $\tau_{verify}$ is the time to verify the {\tt Payload}
signature, and $T_{update}$ is the time to update the nonce data collection. Since cache lookup and
interest forwarding are performed regardless of whether or not routers perform authorization checks, we omit
them from this equation. Similarly, $\tau_{check}$ and $\tau_{update}$ are negligible when compared
to the cost of signature verification $\tau_{verify}$; thus, they are also excluded.

A router incurs a computational cost of $\tau_{overhead}$ for every received interest requesting
content under IBAC protection. Therefore, we quantify $\tau_{overhead}$ by measuring the time it
takes to perform a single signature verification. We also experiment with batch verification
techniques to better amortize the cost of signature verification across series of interests.
While this naturally increases content retrieval latency since signatures are accumulated in
case of batch verification, it reduces router computational overhead.  
Table \ref{tab:verificationtimes} shows the amount of improvement
using a variety of signature verification algorithms.
Note that, when modeling interest arrival rates using a Poisson distribution, both 
individual and batch signature verification incur nearly the same overhead in certain conditions, 
as we will show below.

\begin{table}[t]
\small
\begin{center}
\caption{Individual and batch ElGamal signature verification times.}
\label{tab:verificationtimes}
\begin{tabular}{|c|c|c|c|c|c|} \hline
{\bf Key} & {\bf Batch} & {\bf Sig.} & {\bf Indiv.} & {\bf Batch}  & \multirow{2}{*}{{\bf Improved}} \\
{\bf Size} & {\bf Size} & {\bf Size} & {\bf Time} & {\bf Time} & \\ \hline
\hline
1024b & 10 & 512KB & 0.599s & 0.322s & 46\% \\ \hline
1024b & 10 & 8MB & 0.888s & 0.615s & 30\% \\ \hline
1024b & 50 & 512KB & 2.918s & 1.579s & 46\% \\ \hline
1024b & 50 & 8MB & 4.315s & 2.991s & 30\% \\ \hline
\hline
2048b & 10 & 512KB & 4.065s & 2.207s & 46\% \\ \hline
2048b & 10 & 8MB & 4.104s & 2.269s & 45\% \\ \hline
2048b & 50 & 512KB & 20.081s & 11.029s & 45\% \\ \hline
2048b & 50 & 8MB & 21.301s & 12.536s & 41\% \\ \hline
\hline
3072b & 10 & 512KB & 12.406s & 6.789s & 45\% \\ \hline
3072b & 10 & 8MB & 12.804s & 7.122s & 44\% \\ \hline
3072b & 50 & 512KB & 60.174s & 32.877s & 45\% \\ \hline
3072b & 50 & 8MB & 64.347s & 35.601s & 45\% \\ \hline

\end{tabular}
\end{center}
\end{table}

Denial of service (DoS) is an obvious concern if routers perform authorization checks.
Let $\lambda$ be the rate of arrival interests for IBAC-protected content cached in router
$R$, and let $\mu$ be the service rate for interests, i.e., the rate at which interests
are processed (parsed, verified, etc.). If $\mu < \lambda$, then the router will be over
encumbered with interests to process \cite{gross2008fundamentals}. We envision that in
legitimate scenarios without malicious entities generating interests with fake 
authorization information, only a small
percentage $\delta$ of arrival interests will be requesting content under IBAC protection.
To assess how susceptible routers are to DoS attacks induced by IBAC authorization checks,
we empirically analyze the effect of $\delta$ on the interest service rate of a router.
These service rates, which use different signature verification techniques -- individual and batch --
denoted $\mu_S$ and $\mu_B$, respectively, are shown in Figure \ref{fig:service-rates}.

\begin{figure}
\center
\includegraphics[width=\columnwidth]{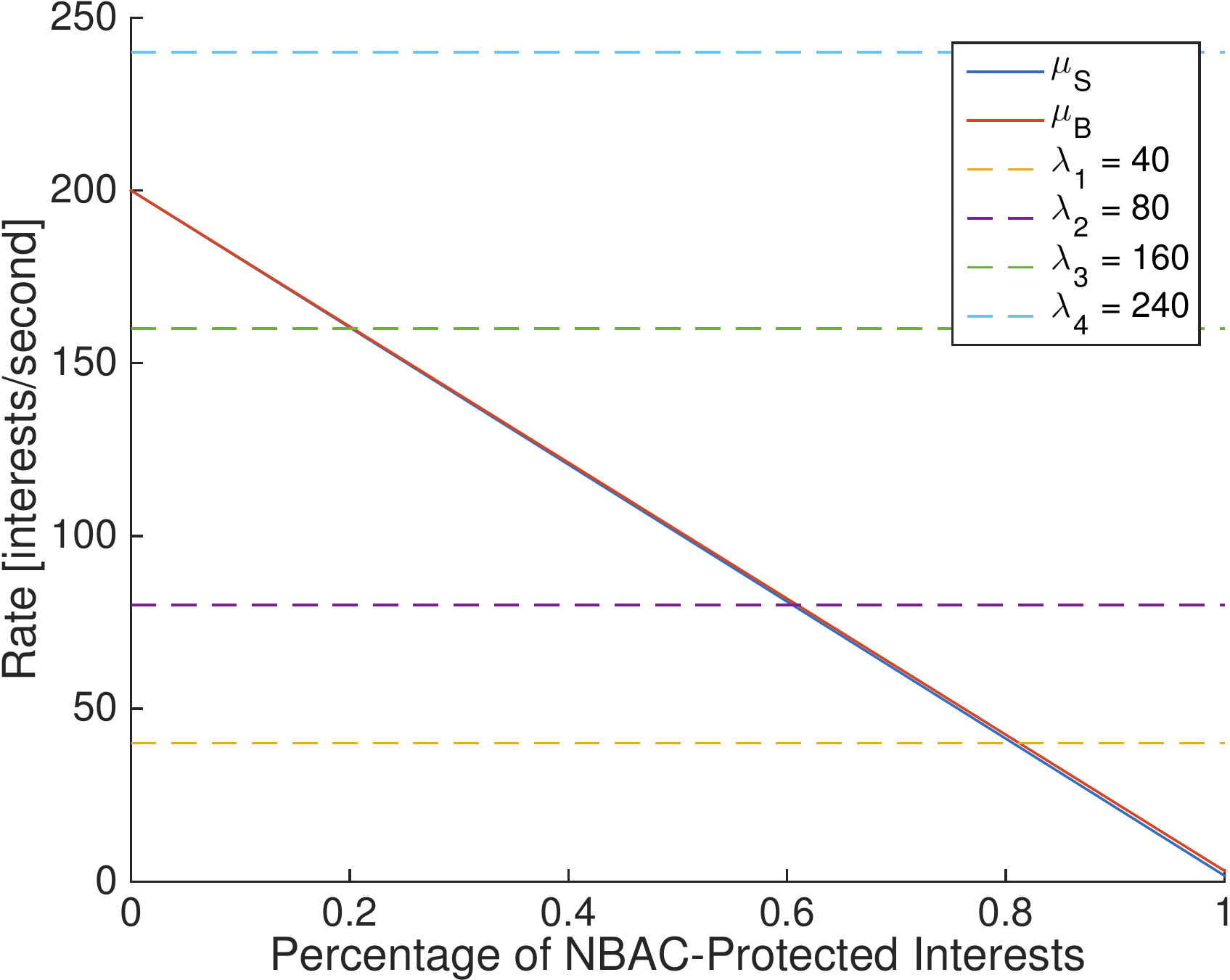}
\caption{Interest service rates for various percentages of IBAC-protected interests.}
\label{fig:service-rates}
\end{figure}

We assume that interests arrive at a base rate of $\lambda_1 = 40$ \cite{carofiglio2011modeling}; larger
values for $\lambda$ are provided to see at which point $\mu < \lambda$ due to authorization checks. By the
exponential property of the Poisson process, $\mu$ is calculated as follows:
\begin{align*}
\mu = \frac{1 - \delta}{\tau_{process}} + \frac{\delta}{\tau_{process} + \tau_{verify}},
\end{align*}
where $\tau_{process}$ represents interest processing time not including signature
verification\footnote{$\tau_{process} = 1/mu$ for interests not requesting IBAC-protected content.},
and $\tau_{verify}$ is the time required to perform individual or batch signature verification.
In our experiment, we assume a constant $\tau_{process} = 0.005$s and only vary
$\tau_{verify}$. To do so, we assume a key size of $1024$b, batch size of 10, and signature size of $512$KB.
According to Table \ref{tab:verificationtimes}, this results in $\tau_{verify} = 0.599$s
and $\tau_{verify} = 0.322$s for individual and batch verification, respectively.
Our experiments show that the decay of $\mu$ as a function of $\delta$
is almost identical for both batch and verification techniques. This is due to the
fact that only a small fraction of interests are affected by the verification step.
Furthermore, our results show that $\mu > \lambda$ is true, i.e., the router servicing
process is stable for reasonable interest arrival rates. Our experiments show that
$\mu < \lambda$ when $\lambda = 160$ and $\delta \geq 0.2$. Moreover, when a Poisson process 
is assumed, both individual and batch signature verification perform similarly for small values 
of $\delta$. However, batch signature verification prove to be advantageous in larger $\delta$ 
values. For instance, for $\delta = 0.2$, using batch verification provides less than $1\%$ service 
rate improvement, where it provides $3\%$ and $46\%$ for $\delta$ values equal to $0.8$ and $1$, 
respectively.

\subsection{Storage Overhead}
Storage overhead varies from producer to router. If hash-based name obfuscation is used,
producers incur the cost of maintaining a hash table to map obfuscated names to their
original values. However, if content name contains variable name components, e.g., query
string-like values in URIs, the hash table size can grow significantly since it has to
contain all possible variations. Moreover, producers must bear the storage cost of IBAC
access group keys if encryption-based obfuscation functions are used. Similarly, routers
must bear the cost of storing variable-length tuples of key identities $\mathsf{ID}_{\mathbb{G}_i}$
and the actual verification keys $pk^s_{\mathbb{G}_i}$, along with a theoretically unbounded
collection of nonces for each IBAC-protected content. This finite amount of storage
can be abused to mount DoS attacks on routers.

\subsection{Bandwidth Overhead}
In terms of bandwidth overhead, each interest and content object is expanded to include
additional authorization information, e.g., interest payloads with authorization information and content
objects with authorization keys. Interests without authorization payloads will only increase
(or decrease) by the expansion factor of the obfuscated name. If authorization payloads are
included, then interest messages will grow by $|r| + |t| + |\sigma| + |\mathsf{ID}_{\mathbb{G}}|$,
where $|r| = \kappa$. Content object $CO[N]$ grows with length
$\sum_{i=1}^L |pk^s_{\mathbb{G}_i}|$, where $L$ is the number of access groups allowed to
access $CO[N]$ and $|pk^s_{\mathbb{G}_i}|$ is the public key size associated with group $\mathbb{G}_i$.


\section{Conclusion} \label{sec:conclusion}
We studied the problem of access control in ICNs. We proposed an Interest-Based Access
Control (IBAC) scheme that supports hash- and encryption-based name obfuscation. We addressed
the problem of replay attacks by formulating a mutual trust framework between
producers and consumers -- enforced in the network-layer -- that enables routers to perform
authorization checks before satisfying interests from cache. We assessed the
computational, storage, and bandwidth overhead induced by each variant of the proposed
IBAC scheme. Ultimately, we believe that our work brings ICNs one step closer to
fulfilling their promise of a more secure networking paradigm.

\balance

\bibliographystyle{abbrv}
\bibliography{references}

\begin{thebibliography}{10}

\bibitem{MSPlayReady}
{Microsoft} {PlayReady}.
\newblock http://www.microsoft.com/playready/.

\bibitem{berners2005rfc}
T.~Berners-Lee, R.~Fielding, and L.~Masinter.
\newblock {RFC} 3986: Uniform resource identifier ({URI}): Generic syntax.
\newblock 2005.

\bibitem{carofiglio2011modeling}
G.~Carofiglio, M.~Gallo, L.~Muscariello, and D.~Perino.
\newblock Modeling data transfer in content-centric networking.
\newblock In {\em ITC}, 2011.

\bibitem{cooper2008internet}
D.~Cooper.
\newblock {RFC} 3280: Internet {X.509} public key infrastructure certificate
  and certificate revocation list ({CRL}) profile.
\newblock 2008.

\bibitem{fotiou2012access}
N.~Fotiou, G.~F. Marias, and G.~C. Polyzos.
\newblock Access control enforcement delegation for information-centric
  networking architectures.
\newblock In {\em ICN Workshop}, 2012.

\bibitem{gasti2012ddos}
P.~Gasti, G.~Tsudik, E.~Uzun, and L.~Zhang.
\newblock {DoS DDoS} in named-data networking.
\newblock In {\em ICCCN}, 2013.

\bibitem{ghali2014needle}
C.~Ghali, G.~Tsudik, and E.~Uzun.
\newblock Needle in a haystack: Mitigating content poisoning in named-data
  networking.
\newblock In {\em NDSS SENT Workshop}, 2014.

\bibitem{ghali2014elements}
C.~Ghali, G.~Tsudik, and E.~Uzun.
\newblock Network-layer trust in named-data networking.
\newblock {\em CCR}, 44:12--19, 2014.

\bibitem{gross2008fundamentals}
D.~Gross.
\newblock {\em Fundamentals of queueing theory}.
\newblock John Wiley \& Sons, 2008.

\bibitem{Ion2013}
M.~Ion, J.~Zhang, and E.~M. Schooler.
\newblock Toward content-centric privacy in {ICN}: Attribute-based encryption
  and routing.
\newblock In {\em ICN}, 2013.

\bibitem{jacobson2009networking}
V.~Jacobson, D.~K. Smetters, J.~D. Thornton, M.~F. Plass, N.~H. Briggs, and
  R.~L. Braynard.
\newblock Networking named content.
\newblock In {\em CoNext}, 2009.

\bibitem{Jahid2011}
S.~Jahid, P.~Mittal, and N.~Borisov.
\newblock {EASiER}: Encryption-based access control in social networks with
  efficient revocation.
\newblock In {\em ASIACCS}, 2011.

\bibitem{Kamara2010}
S.~Kamara and K.~Lauter.
\newblock Cryptographic cloud storage.
\newblock In {\em FC}, 2010.

\bibitem{katz2014introduction}
J.~Katz and Y.~Lindell.
\newblock {\em Introduction to modern cryptography}.
\newblock CRC Press, 2014.

\bibitem{krawczyk1997hmac}
H.~Krawczyk, R.~Canetti, and M.~Bellare.
\newblock {RFC} 2104: {HMAC}: Keyed-hashing for message authentication.
\newblock 1997.

\bibitem{ifip15}
J.~Kurihara, C.~Wood, and E.~Uzuin.
\newblock An encryption-based access control framework for content-centric
  networking.
\newblock {\em IFIP}, 2015.

\bibitem{Misra2013}
S.~Misra, R.~Tourani, and N.~E. Majd.
\newblock Secure content delivery in information-centric networks: Design,
  implementation, and analyses.
\newblock In {\em ICN}, 2013.

\bibitem{moskoccn}
M.~Mosko, I.~Solis, and E.~Uzun.
\newblock {CCN} 1.0 protocol architecture.

\bibitem{myers1999online}
M.~Myers, R.~Ankney, A.~Malpani, S.~Galperin, and C.~Adams.
\newblock {RFC} 2560: Online certificate status protocol - {OCSP}.
\newblock 1999.

\bibitem{Smetters2010}
D.~K. Smetters, P.~Golle, and J.~D. Thornton.
\newblock {CCNx} access control specifications.
\newblock Technical report, {PARC}, July 2010.

\bibitem{Wang2010a}
G.~Wang, Q.~Liu, and J.~Wu.
\newblock Hierarchical attribute-based encryption for fine-grained access
  control in cloud storage services.
\newblock In {\em CCS}, 2010.

\bibitem{Wood2014}
C.~A. Wood and E.~Uzun.
\newblock Flexible end-to-end content security in {CCN}.
\newblock In {\em CCNC}, 2014.

\bibitem{Yu2010}
S.~Yu, C.~Wang, K.~Ren, and W.~Lou.
\newblock Achieving secure, scalable, and fine-grained data access control in
  cloud computing.
\newblock In {\em INFOCOM}, 2010.

\bibitem{NDN}
L.~Zhang, A.~Afanasyev, J.~Burke, V.~Jacobson, P.~Crowley, C.~Papadopoulos,
  L.~Wang, B.~Zhang, et~al.
\newblock Named data networking.
\newblock {\em CCR}, 44(3):66--73, 2014.

\bibitem{Zhou2013}
L.~Zhou, V.~Varadharajan, and M.~Hitchens.
\newblock Achieving secure role-based access control on encrypted data in cloud
  storage.
\newblock {\em Transactions on Information Forensics and Security},
  8(12):1947--1960, 2013.

\end{thebibliography}

\appendix
\section{Proofs of Security} \label{app:proofs}
In this section, we prove the security properties of IBAC with and without authorized
disclosure with respect to the adversarial model described in Section \ref{sec:secmodel}.
In the following, let $N$ be the name of a content object under IBAC protection and generated by $P$. 
Also, let adversary $\mAdv = (\mathcal{P}_{\mAdv} \setminus \{P\}, \mathcal{C}_{\mAdv} \setminus \auth(N), \mathcal{R}_{\mAdv} \setminus \mathsf{Path}(Cr, P))$, where $Cr \in \auth(N)$. 
\begin{theorem}\label{thm:proof1}
The IBAC scheme without authorized disclosure is secure, {\em but subject to replay attacks}, 
against $\mAdv$ if an indistinguishably\hyp{}secure (IND-secure) deterministic 
encryption algorithm is used for name obfuscation.
\end{theorem}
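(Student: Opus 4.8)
The plan is to establish the contrapositive through a reduction to the IND-security of the deterministic encryption scheme $\mathsf{Enc}$. By Definition~\ref{def:secure_nbac_1} it suffices to bound $\Pr[\mathsf{Guess}]$; the $\mathsf{Bypass}$ event is irrelevant here, since ``without authorized disclosure'' means the scheme carries no signatures, nonces, or timestamps, and replay attacks are therefore explicitly admitted. Accordingly, I would suppose for contradiction that there is a PPT adversary $\mAdv$ which, given the cleartext name $N$ and the public system parameters, outputs the obfuscated name $N' = \mathsf{Enc}(k, N)$ with non-negligible probability $p = \Pr[\mathsf{Guess}]$, and from it build a distinguisher $\mathcal{B}$ against $\mathsf{Enc}$.

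First I would argue that $\mAdv$'s entire view can be faithfully simulated given only black-box access to $\mathsf{Enc}(k, \cdot)$. This is exactly where the adversarial restrictions of Section~\ref{sec:secmodel} are used: because $\mAdv \in \nauth(N)$ it never holds $k$, and because it neither controls $P$ nor sits on the delivery path (i.e.\ $R \notin \mathsf{Path}(Cr,P)$) it never observes the genuine obfuscated names emitted by authorized consumers. Hence every obfuscated name that $\mAdv$ could legitimately expect to see for other content can be manufactured by $\mathcal{B}$ through the encryption oracle, and the theorem genuinely \emph{requires} these restrictions -- an on-path adversary would simply read $N'$ off the wire.

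The reduction $\mathcal{B}$ then plays the IND experiment as follows. It fixes a second name $N_1 \neq N$, submits the pair $(N, N_1)$ as its challenge, and receives $c^\ast = \mathsf{Enc}(k, m_b)$ for the hidden bit $b$. It runs $\mAdv$ on input $N$ inside the simulated environment (never querying the oracle on $N$ or $N_1$) to obtain a guess $\hat{N}$, and outputs $b = 0$ iff $\hat{N} = c^\ast$. When $b = 0$ we have $c^\ast = \mathsf{Enc}(k, N)$, so $\mathcal{B}$ answers correctly with probability $p$. When $b = 1$ we have $c^\ast = \mathsf{Enc}(k, N_1)$, and since any valid encryption scheme is invertible, $\mathsf{Enc}(k,\cdot)$ is injective, so $\mathsf{Enc}(k, N_1) \neq \mathsf{Enc}(k, N)$; thus a correct guess by $\mAdv$ never collides with $c^\ast$, and $\mathcal{B}$ errs only on the event $\{\hat{N} = \mathsf{Enc}(k, N_1)\}$, of probability $q$. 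The IND-advantage of $\mathcal{B}$ is therefore $|p - q|$.

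The main obstacle, and the step I would spend the most care on, is bounding $q = \Pr[\hat{N} = \mathsf{Enc}(k, N_1)]$. The clean resolution is to observe that this quantity is itself a ciphertext-prediction probability: since $\mAdv$'s output distribution is independent of $N_1$ and of $c^\ast$, predicting $\mathsf{Enc}(k, N_1)$ from public data alone is no easier than distinguishing encryptions, so $q$ is negligible by the very IND assumption being invoked. Consequently $|p - q|$ is non-negligible, contradicting IND-security and forcing $p \le \epsilon(\kappa)$. A secondary subtlety worth flagging is that meaningful indistinguishability for a \emph{deterministic} cipher requires the challenge plaintexts to be distinct and of sufficiently high min-entropy; I would state the IND notion in that (PRIV-style) form so that the pair $(N, N_1)$ is admissible and the reduction is well defined.
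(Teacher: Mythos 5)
Your overall strategy---turning a name-guessing adversary into an explicit IND distinguisher---is sound, and it is in fact a genuinely different (and more rigorous) route than the paper's own proof, which never builds a reduction: the paper simply places $\mAdv$ inside the eavesdropping-indistinguishability experiment and asserts that $\Pr[\mathsf{Guess}]$ \emph{equals} the distinguishing advantage. However, your proof has a genuine gap at precisely the step you flagged. To bound $q = \Pr[\hat{N} = \mathsf{Enc}(k, N_1)]$ you argue that ``predicting $\mathsf{Enc}(k,N_1)$ from public data alone is no easier than distinguishing encryptions, so $q$ is negligible by the very IND assumption being invoked.'' That assertion \emph{is} the statement of the theorem (ciphertext-prediction hardness follows from IND security), merely applied to the message $N_1$ instead of $N$. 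Invoking it inside the proof of the theorem is circular: if you try to discharge it by running your own reduction again, you must bound the prediction probability of a ciphertext for yet another message $N_2$, and so on---an infinite regress that never terminates. As written, the advantage $|p - q|$ has not been shown to be non-negligible.

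The gap is repairable by a standard counting argument that needs no appeal to IND security at all: let $\mathcal{B}$ draw $N_1$ uniformly at random from a name space of super-polynomial size $M$ (excluding $N$), keeping $N_1$ hidden from $\mAdv$. Since the scheme is deterministic and correct, $\mathsf{Enc}(k,\cdot)$ is injective, so $\hat{N} = \mathsf{Enc}(k,N_1)$ holds iff $N_1 = \mathsf{Dec}(k,\hat{N})$; and since $N_1$ is chosen independently of $k$ and of $\mAdv$'s entire run, this event has probability at most $1/(M-1)$, which is negligible. Your reduction then achieves advantage at least $p - 1/(M-1)$, contradicting IND security and completing the proof. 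The rest of your argument is fine: the off-path/no-key restrictions do justify the simulation, and your PRIV-style caveat about deterministic encryption is well taken. One further simplification: the paper's IND experiment gives the distinguisher no encryption oracle, and under the stated adversary model $\mAdv$ never observes any genuine obfuscated name, so $\mathcal{B}$ needs no oracle whatsoever---your simulation claim is simpler than you make it.
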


\noindent
{\bf Note}: IND 
security is typically identical to CPA security in the public-key setting 
since the adversary is assumed to have access to the public key \cite{katz2014introduction}. 
In this case, neither the encryption nor decryption key is known to $\mAdv$.

\begin{proof}
Let $\Pi = (\mathsf{Gen}, \mathsf{Enc}, \mathsf{Dec})$ be an IND-secure (deterministic) encryption scheme consisting of three probabilistic polynomial time algorithms $\mathsf{Gen}$, $\mathsf{Enc}$, and $\mathsf{Dec}$ for key generation, encryption, and decryption, respectively. Let $k_e$ and $k_d$ be the encryption and decryption keys produced by $\mathsf{Gen}$. For any interest name $N$, it holds that $\mathsf{Dec}(k_d, \mathsf{Enc}(k_e, N)) = N$. Let \Adv\ be any probabilistic polynomial adversary. The definition of the eavesdropping indistinguishability experiment, adapted for plaintext interest messages, denoted $\mathsf{Exp}_{\mAdv, \Pi}^{\mathsf{ind}}$, is as follows:
\begin{enumerate}
	\item \Adv\ is given input $1^\kappa$ and outputs a pair of interest names $N_0$
	and $N_1$, and $k_e$ and $k_d$ are computed by running $\mathsf{Gen}(1^\kappa)$.
	\item A single bit $b \gets \{0,1\}$ is chosen uniformly at random. The challenger
	computes the ciphertext $c \gets \mathsf{Enc}(k_e, N_b)$, which is given to \Adv.
	\item \Adv\ outputs a single bit $b'$.
	\item The output of the experiment is said to be $1$ if $b' = b$ and $0$ otherwise.
\end{enumerate}
Let $\mathsf{Exp}_{\mAdv, \Pi}^{\mathsf{ind}}(\kappa, b)$ be the same experiment run
but where bit $b$ is given as an input value. By the definition of IND-security, it follows that
\begin{align*}
|\Pr[\mathsf{Exp}_{\mAdv, \Pi}^{\mathsf{ind}}(\kappa, 1) = 1] - \Pr[\mathsf{Exp}_{\mAdv, \Pi}^{\mathsf{ind}}(\kappa, 0) = 1]| \leq \epsilon(\kappa),
\end{align*}
for some negligible function $\epsilon$. Recall that $\mathsf{Guess}$ is the event that
\Adv\ correctly guesses the obfuscated version a content name. The probability of
\Adv\ decrypting a message is at least $\Pr[\mathsf{Guess}]$. Therefore, 
the event when \Adv\ successfully guesses the obfuscated version of the name,
is when \Adv\ outputs $b' = 1$ when $b = 1$ and $b' = 0$ when $b = 0$. Thus,
\begin{align*}
\Pr[\mathsf{Guess}] &= |\Pr[\mathsf{Exp}_{\mAdv, \Pi}^{\mathsf{ind}}(\kappa, 1) = 1] \\
&\quad- \Pr[\mathsf{Exp}_{\mAdv, \Pi}^{\mathsf{ind}}(\kappa, 0) = 1]| \\
&\leq \epsilon(\kappa)
\end{align*}
This concludes the proof.
\end{proof}

\begin{theorem}
The IBAC scheme with authorized disclosure is secure, {in presence of replay attacks},
against $\mAdv$ if an indistinguishably\hyp{}secure (IND-secure) deterministic
encryption algorithm is used with an existentially unforgeable signature scheme.
\end{theorem}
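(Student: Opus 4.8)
The plan is to prove security in the presence of replay attacks by showing that both failure events---$\mathsf{Guess}$ and $\mathsf{Bypass}$---occur with only negligible probability, and then combining them via a union bound. The statement $\Pr[\mathsf{Guess} + \mathsf{Bypass}] \leq \epsilon(\kappa)$ decomposes naturally into two independent-looking claims handled by two different cryptographic assumptions. For the $\mathsf{Guess}$ term, I would simply invoke Theorem~\ref{thm:proof1}: the obfuscation function is unchanged by adding authorized disclosure, so the IND-security of the deterministic encryption scheme already gives $\Pr[\mathsf{Guess}] \leq \epsilon_1(\kappa)$ for some negligible $\epsilon_1$. This reuse is the easy half, and the whole point of having proved the name-obfuscation security separately is to let us treat replay protection as an orthogonal, additive layer.

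The substance of the new proof is bounding $\Pr[\mathsf{Bypass}]$, i.e., showing that $\mAdv$ cannot pass the authorization check of Algorithm~\ref{alg:router-check} for protected content. First I would formalize what ``bypassing'' means in terms of the check: the adversary must submit an interest whose {\tt Payload} contains a triple $(\mathsf{ID}_{\mathbb{G}_i}, r, t, \sigma)$ that passes the nonce-freshness test, the timestamp-validity test, and---critically---the signature verification $\mathsf{Verify}_{pk^s_{\mathbb{G}_i}}(\sigma)$, where $\sigma$ binds $N' \| \mathsf{ID}_{\mathbb{G}_i} \| r \| t$. Because $\mAdv \in \nauth(N)$ does not hold any group signing key $sk^s_{\mathbb{G}_i}$, the adversary has two avenues: replay a previously captured valid payload, or forge a fresh signature. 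I would argue the replay avenue is foreclosed by the nonce table $\mathbb{B}$ and the window $w$: a replayed interest carries an already-recorded nonce $r$ (hence fails the $\mathbb{B}[N']$ test) or a stale timestamp $t$ (hence fails the timestamp test), so any payload that passes both tests must contain a \emph{fresh} $(r,t)$ pair never before accepted for $N'$. A payload that is both fresh and verifies under $pk^s_{\mathbb{G}_i}$ is therefore a valid signature on a previously unsigned message, which is exactly an existential forgery. Reducing to the EUF-CMA game, I would construct a forger $\mathcal{B}$ that embeds the challenge verification key as $pk^s_{\mathbb{G}_i}$, answers honest consumers' signing requests through its signing oracle, and outputs the adversary's forged $(N' \| \mathsf{ID}_{\mathbb{G}_i} \| r \| t, \sigma)$ as its forgery; this yields $\Pr[\mathsf{Bypass}] \leq \epsilon_2(\kappa)$ for negligible $\epsilon_2$.

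The hard part will be carefully ruling out the ``cross-router replay'' loophole that the paper itself acknowledges in Section~\ref{sec:replay}: an eavesdropper can replay a captured interest to a \emph{different} cache $R_2$ that has not recorded the nonce, and this would pass the check with a genuinely valid (not forged) signature. To keep the theorem true I would lean on the adversarial model of Section~\ref{sec:secmodel}, which restricts $\mAdv$ to off-path routers and explicitly excludes adversaries ``capable of capturing interests and replaying them in other parts of the network.'' I would therefore state as a modeling assumption that the nonce table $\mathbb{B}$ is effectively shared (or that the synchronized-clock window $w$ is tight enough to preclude the attack), reducing the multi-cache case to the single-verifier analysis above; without this assumption the $\mathsf{Bypass}$ bound does not hold, so making the scope of the model explicit is essential. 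Finally I would combine the two bounds: since $\{\mathsf{Guess} + \mathsf{Bypass}\}$ is the union of the two events, $\Pr[\mathsf{Guess} + \mathsf{Bypass}] \leq \Pr[\mathsf{Guess}] + \Pr[\mathsf{Bypass}] \leq \epsilon_1(\kappa) + \epsilon_2(\kappa) = \epsilon(\kappa)$, which is negligible as the sum of two negligible functions, completing the proof.
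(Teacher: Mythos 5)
Your proposal is correct and follows the same skeleton as the paper's own proof: decompose the failure event into $\mathsf{Guess}$ and $\mathsf{Bypass}$, invoke Theorem~\ref{thm:proof1} for the former, use existential unforgeability for the latter, and conclude because a sum of negligible functions is negligible. Where you differ is in rigor, and the differences are to your credit. First, the paper simply asserts that without $sk^s_{\mathbb{G}_i}$ a bypass ``only occurs if \Adv\ is able to forge the {\tt Payload} signature,'' whereas you make the underlying case analysis explicit: a payload passing the checks of Algorithm~\ref{alg:router-check} is either a replay (foreclosed by the nonce table $\mathbb{B}$ and the window $w$) or a valid signature on a fresh message, i.e., an existential forgery; you then sketch the actual EUF-CMA reduction (embedding the challenge verification key as $pk^s_{\mathbb{G}_i}$ and simulating honest consumers through the signing oracle), which the paper omits entirely. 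Second, and more importantly, you surface the cross-router replay loophole: an interest captured near one cache and replayed at a second cache $R_2$ carries a \emph{valid} signature and a nonce unknown to $R_2$, so it bypasses the check without any forgery --- this is exactly the attack the paper concedes in Section~\ref{sec:replay}, yet its proof of this theorem never reconciles that concession with the claimed bound. Your move of stating explicitly that the theorem holds only under the adversarial model of Section~\ref{sec:secmodel} (which excludes adversaries who replay interests in other parts of the network), or alternatively under a shared or tightly synchronized nonce state, is precisely the caveat needed to make $\Pr[\mathsf{Bypass}] \leq \epsilon(\kappa)$ true as stated; the paper's proof is silently incomplete without it.
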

\begin{proof}
In Theorem \ref{thm:proof1}, we proved that $\Pr[\mathsf{Guess}] \leq \epsilon(\kappa)$. It
is easy to see that the additional {\tt Payload} information -- the random nonce, timestamp,
and signature -- are all distinct for each interest. Therefore, including this information
leaks no information that improves the adversaries advantage or improves $\Pr[\mathsf{Guess}]$.

We now assess $\Pr[\mathsf{Bypass}]$. Recall that this event occurs when \Adv\ bypasses the
authorization check at a router to recover content from a cache. Without knowledge of $sk^s_{\mathbb{G}_i}$,
this only occurs if \Adv\ is able to forge the {\tt Payload} signature. By definition of the
existentially unforgeable signature scheme, \Adv\ is not able to generate an input set
$(\hat{N'}, \hat{\mathsf{ID}_{\mathbb{G}_i}}, \hat{r}, \hat{t}) \neq (N', \mathsf{ID}_{\mathbb{G}_i}, r, t)$
such that $\mathsf{Verify}_{pk^s_{\mathbb{G}_i}}(\hat{\sigma})$ occurs with non-negligible probability.
Thus, $\Pr[\mathsf{Bypass}] \leq \epsilon(\kappa)$. Finally, since the sum of two negligible probabilities is
also negligible, then $\Pr[\mathsf{Guess} + \mathsf{Bypass}] \leq \epsilon(\kappa)$ .
\end{proof}

\end{document}